\newtheorem{claim}{Claim}
\newtheorem{definition}{Definition}
\newenvironment{proof}[1][Proof]{\noindent\textbf{#1.} }{\ \rule{0.5em}{0.5em}}
\begin{document}

\begin{frontmatter}



\title{Green Transition with Dynamic Social Preferences} 


\author[1]{Kirill Borissov} 

\affiliation[1]{organization={European University at St. Petersburg},
           addressline={6/1A Gagarinskaya Street}, 
           city={St. Petersburg},
           postcode={191187}, 
           country={Russia}}

\author[2]{Nigar Hashimzade} 

\affiliation[2]{organization={Brunel University of London},
           addressline={Kingston Lane}, 
           city={Uxbridge},
           postcode={UB8 3PH}, 
           state={Middlesex},
           country={United Kingdom}}

\begin{abstract}
We examine a green transition policy involving a tax on brown goods in an economy where preferences for green consumption consist of a constant intrinsic individual component and an evolving social component. We analyse equilibrium dynamics when social preferences exert a positive  externality in green consumption, creating complementarity between policy and preferences. The results show that accounting for this externality allows for a lower tax rate compared to policy ignoring the social norm effects. Furthermore, stability conditions permit gradual tax reductions or even removal along the transition path, minimising welfare losses. Thus, incorporating policy-preference interactions improves green transition policy design.
\end{abstract}



\begin{keyword}

climate change \sep endogenous preferences \sep green transition \sep social norms \sep temporary policies 



\end{keyword}

\end{frontmatter}

\section{Introduction}
Among the numerous effects of climate change on societies and habitat is a potentially significant reduction in global economic capacity. \citet{burke2015global} predict 23 percent decline in global GDP caused by global warming by 2100 compared to the scenario without climate change. A primary driver of this loss is a decrease in productivity, which, according to this study, peaks at $13$ degrees centigrade and rapidly fall at higher temperatures. Furthermore, human health is an important factor of labour productivity.  The detrimental effect of climate change on human health is one of its least contested and most urgent consequences (\citealp{oecd2024}). Rising temperatures and polluted environment exacerbate infectious and non-communicable diseases (\citealp{who2023noncommunicable}, \citealp{kazmierczak2022climate}) as well as respiratory and cardiovascular conditions (\citealp{giorgini2017climate}), directly impairing workers’ productive capacity and economic output (\citealp{zhang2011measuring}).

It is now widely established that greenhouse gas (GHG) emissions constitute one of the main reasons of global warming. To mitigate this pressing issue, societies must transition from high-emission (`brown') goods and technologies to low-emission (`green') alternatives. A range of environmental policies facilitating this transition has been adopted in many countries, with carbon taxes, in one form or another, frequently at the core of these measures. Equally important, however, is the intrinsic motivation of the population, driven by individual and societal values, to alter their consumption choices. Recently, the literature has turned to analysing the interaction between environmental policies and environmental values. Our paper contributes to this body of research by modelling the effect of social norms on individual preferences in the form of social externalities in green consumption and examining its interaction with the effect of a carbon tax levied on brown consumption. 

The empirical evidence of social norms in environmental preferences was documented by \citet{andre2021fighting}. Based on a survey of a large representative sample of U.S. adult population, the authors find that, firstly, the willingness to combat climate change varies significantly across individuals, and, secondly, it was significantly positively affected by the individual perception of social norms: the willingness was higher if a respondent believed that higher proportion of population is also willing to fight climate change. The effect of the social norm in this context is similar to the externality in consumption of network goods, although the channel is different. This analogy suggests that the presence of social norms in green behaviour can alter the optimal design of climate policies in a way similar to the presence of network consumption externalities. 

The importance of taking into account the network effect in setting carbon taxes was highlighted, for example, by \citet{sartzetakis2005environmental}, who showed that when the network effect benefits dirty technology the Pigouvian carbon tax must be set higher than the marginal environmental damage in order to resolve the negative externality. Similarly, \citet{greaker2016network} considered the situation with network externalities in both the clean and the dirty sectors and showed that a tax which only takes into account the environmental externality leads to excess inertia in green transition. Another example of externalities, on the production side rather than consumption or preferences, in the climate policies context is analysed in the work by \citet{borissov2019carbon}. In their model the externality, generated by the societal human capital, is in labour productivity, and its interaction with the carbon tax introduced to facilitate the switch to the green technology determines the tax design. 

There is also empirical evidence of the effect of climate policies on consumers' preferences. Interestingly, the literature reports both crowding-in and crowding-out effects (see, e.g. \citealp{lanz2018behavioral}, \citealp{bartels2025more}, \citealp{pizzo2025carbon}). \citet{mattauch2022economics} modelled a direct effect of carbon tax on consumers' preference and showed that recognising the endogeneity of preferences with respect to climate policies can improve their design. \citet{besley2023political} model dynamic interaction between green and brown technologies and values when green values contain an intrinsic moral component and are also partly affected by economic incentives and, thus, by policies altering these incentives. The authors focus on the positive analysis of green transition policies in a political economy setting and demonstrate the importance of co-evolution of values and technologies for the characteristics of the political equilibria.

Our work is built on similar premises, centred on the assumption of the social effect in consumers' preferences for green and brown goods. An intrinsic, or moral component of the preferences is combined with the social norm determined by the relative prevalence of green and brown goods consumption in the society. Brown goods are associated with GHG emissions damaging health and thus, reducing the labour productivity and the overall productive capacity of the economy. In this setting we analyse the policy of sales tax on brown good used to fund public healthcare services. The effect of this policy on preferences is mediated by the social norm, as the policy affects the aggregate production and consumption of the green and brown goods both through the relative price of the goods (tax makes the brown good relatively more expensive) and through the households' labour income (public healthcare improves labour productivity). We characterise the dynamic and the steady-state equilibria (SSE) in this economy and derive the conditions for convergence to the green SSE. We show that taking into account endogenous evolution of preferences helps design a gradual policy, with moderate tax that can even be temporary, thus avoiding a large negative shock to consumption and welfare from a high tax that ignores the change in preferences. 

\section{The Model}
The model economy consists of the firms, the households, and the government. The firms specialise in producing one of the two consumption goods, green and brown, that have similar use. These could be, for example, electric cars and petrol cars, or heat pumps and gas boilers, or plant-based meat and dairy substitutes and conventional meat and dairy. The markets are perfectly competitive, the firms operate a constant-returns-to-scale technology and, therefore, the production sector can be described by two representative firms. The households supply labour to the firms and use their labour income to purchase the goods. The brown good contributes to GHG emissions and climate change, which has detrimental effect on health and, therefore, on labour productivity. Thus, the brown good consumption generates a negative externality reducing output and labour income, ignored by private agents. The consumption of green good generates a positive social externality: an observed increase in the aggregate green good consumption strengthens the consumer preferences towards the green good. The government collects taxes and uses tax revenues to fund public healthcare services mitigating the effect of climate change on health. Time is discrete, and all production, consumption, and policy decisions are made in every period.

\subsection{Producers}

The outputs $G_t$ and $B_t$ of the green and brown goods at time $t$ are determined by linear production functions. For simplicity, we assume that labour is the only variable factor of production:
\[
	G_t = a^{g} \mu_t l_t^{g},
\] 
\[
	B_t = a^{b}\mu_t l_t ^{b},
\]	
where the scale factors $a^{g}$ and $a^{b}$ are the fixed inputs or technology levels in the green and brown good sectors, $l_t^{g}$ and $l_t^{b}$ are labour inputs, and $\mu_t$ is labour productivity determined by the health status of the population. The current health status of the population decreases in the quantity of the brown good and increases in the level of public healthcare services in the previous time period, \[ \mu_t = \mu (B_{t-1},H_{t-1}), \] where $B_{t-1}$ is the total output of good $B$ at time $t-1$ and $H_{t-1}$ is the level of public healthcare services at time $t-1$. We assume that $\mu (B,H)$ is a continuous function and set $\mu (0,0) = 1$.

The workers are perfectly mobile across sectors. The outputs of the two goods are determined as the solution to the firms' profit-maximisation problems,
\begin{equation*}\label{pm-c0}
\max_{l_t^{g}\geq 0}	p_tG_t - w_tl_t, \ \text{s.t.} \ G_t = \mu (B_{t-1},H_{t-1})a^{g}l_t^{g} 
\end{equation*}
and
\begin{equation*}\label{pm-d0}
	\max_{l_t^{b}\geq 0}	B_t - w_tl_t, \ \text{s.t.} \ B_t = \mu (B_{t-1},H_{t-1})a^{b}l_t^{b},
\end{equation*}	
where  $p_t$ is the price of the green good at time $t$, $w_t$ is the time-$t$ wage rate, and the price of the brown good is normalised to $1$. 
These problems can be rewritten as follows: 
\begin{equation}\label{pm-c}
	\max_{G_t\geq 0}	p_tG_t - \frac{w_tG_t}{\mu (B_{t-1},H_{t-1})a^{g}}, 
\end{equation}
\begin{equation}\label{pm-d}
	\max_{B_t\geq 0}	B_t - \frac{w_tB_t}{\mu (B_{t-1},H_{t-1})a^{b}}.
\end{equation}

\subsection{Households}	
	There is a continuum $[0,1]$ of households, labelled by index $i$, each endowed with one unit of labour. As workers, the households supply labour in every time period and earn income equal to the wage rate at that time. As consumers, the households spend income on purchases of the brown and the green good. 
	
	We assume that the green and the brown goods are perfect substitutes in consumption. The utility consumer $i$ derives from  consumption of the green and the brown goods, $g_t$ and $b_t$, at time $t$ takes the form 
\[
u^i(g_t, b_t) = \lambda_{t}\gamma(i) g_t + b_t,
\]
	where an intrinsic constant individual-specific relative preference for the green good, denoted by $\gamma(i)$, is augmented by the socially determined preference parameter at time $t$, denoted by $\lambda_{t}$. 
	
	The function $\gamma(i)$ is assumed to be continuous and strictly decreasing (in other words, the consumers are labelled in the decreasing order of their green good preferences).
	
	For the social component of preferences we assume that $\lambda_{t} = \lambda (\frac{G_{t-1}}{B_{t-1}})$, where $G_{t-1}$ and $B_{t-1}$ are the aggregate consumption levels of the green and the brown good at time $t-1$. The function $\lambda (\cdot)$ is continuous and increasing. This reflects the social norm or peer pressure in population preferences: an increase in the overall consumption of the green good relative to the brown good makes the green good more attractive for each consumer.

The brown good purchases are subject to a sales tax. Given the time-$t$ wage rate, $w_t$, the price of the green good, $p_t$, and the rate of tax on consumption of the brown good, $\tau_{t}$, the budget constraint of an individual household is given by
	\[
	p_tg_t + (1+\tau_{t})b_t = w_t,
	\] 
	where $g_t$ and $b_t$ are the individual consumption levels of the green and brown goods at time $t$.
    
	Thus, at time $t$ household $i$ solves the following utility maximization problem:
\begin{equation}\label{um}
	\max \lambda_{t}\gamma(i) g_t + b_t,  \ \text{s.t.} \ 
	p_tg_t + (1+\tau_{t})b_t = w_t,
\end{equation} 
The solution to this problem, $(g_t(i),b_t(i))$, is as follows: 
\begin{equation*}
	g_t(i) = \begin{cases}
		\frac{w_t}{p_t},   &\lambda_{t}\gamma(i) > \frac{p_t}{1+\tau_{t}} \\
		0,  & \lambda_{t}\gamma(i) < \frac{p_t}{1+\tau_{t}} 
	\end{cases}
\end{equation*}
\begin{equation*}
	b_t(i) = \begin{cases}
		0,   &\lambda_{t}\gamma(i) > \frac{p_t}{1+\tau_{t}} \\
		\frac{w_t}{1+\tau_{t}},  & \lambda_{t}\gamma(i) < \frac{p_t}{1+\tau_{t}} 
	\end{cases}
\end{equation*}

Thus, every household purchases either green or brown good. Moreover, there exists $j\in [0,1]$ such that all households with $i \in [0,j)$ choose to buy the green good and all households with $i \in (j,1]$ choose to buy the brown good. Thus, $j$ is the share of households who consume the green good and $1-j$ is the share of households who consume the brown good. The marginal household $j$, characterised by $\lambda_{t}\gamma(j) = \frac{p_t}{1+\tau_{t}}$, is indifferent between the two goods. For determinacy,  we take as the solution for the marginal household either the vector  $(g_t(j),b_t(j))$ given by $g_t(j) = \frac{w_t}{p_t}$ and $b_{t}(j) = 0$ or the vector  $(g_t(j),b_t(j))$ given by $g_t(j) = 0$ and $b_t(j) = \frac{w_t}{1+\tau_{t}}$. Which of these two vectors to take will be clear from the context. 

\subsection{Government}

The government collects the sales tax from purchases of the brown good and uses the tax revenues to fund the public healthcare system. The production of healthcare services is given by \[
H_t = a^h \mu_t l^h_t
\] 

To simplify the notations further on, without loss of generality we set $a^h=a^b$ (this amounts to the choice of the unit of measurement of $H$). We assume that the government runs balanced budget every period, \[H_t = T_t \equiv \tau_t B_t.\] 

\section{Equilibrium and Its Dynamic Properties}

Suppose we are given $B_{t-1}$, $T_{t-1}$, $\lambda_{t}$ and $\tau_{t}$. We define the dynamic equilibrium as the time path of prices and quantities such that in every time period the firms maximise profits, the households maximise utility, the markets for goods and for labour inputs clear, the firms earn zero profits, and the government balances the budget.

\begin{definition}
The tuple $(p_{t},w_{t},G_{t},B_{t},H_t,(g_{t}(i),b_{t}(i))_{i\in [0,1]})$ constitutes a time$-t$ equilibrium if (i) $G_{t}$ solves (\ref{pm-c}); (ii) $B_{t}$ solves (\ref{pm-d}); (iii)  for every $i\in [0,1]$, $(g_{t}(i),b_{t}(i))$ solves (\ref{um}); and (iv) the following equalities hold:
\begin{itemize}
	\item $G_{t} = \int_{i=0}^{1} g_{t}(i) di$;
	\item $B_{t} = \int_{i=0}^{1} b_{t}(i) di$;
   \item $\frac{G_{t}}{\mu (B_{t-1},T_{t-1})a^{g}} + \frac{B_{t}}{\mu (B_{t-1},T_{t-1})a^{b}} + \frac{\tau_t B_t}{\mu (B_{t-1},T_{t-1})a^{b}} = 1$;
	\item $w_{t} = \mu (B_{t-1},H_{t-1})a^{b}$;
     \item $H_t = \tau_t B_t$.
\end{itemize}

\end{definition}

The time$-t$ equilibrium is determined by the share  of households who purchase the green good, i.e. the number $j_{t}$ such that 
\begin{equation}\label{c-eq}
	g_{t}(i) = \begin{cases}
		\frac{w_{t}}{p_{t}},   & i < i^*_{t} \\
		0,  & i > i^*_{t}, 
	\end{cases}
\end{equation}
\begin{equation}\label{d-eq}
	b_{t}(i) = \begin{cases}
		0,   &i < i^*_{t} \\
		\frac{w_{t}}{1+\tau_{t}},  & i > i^*_{t}. 
	\end{cases}
\end{equation}
 
The firms earn zero profits in equilibrium under the perfect competition assumption. Therefore, from (\ref{pm-c}) and (\ref{pm-d}), 
\begin{equation} \label{w-eq}
	w_{t} = p_{t} \mu (B_{t-1},H_{t-1})a^{g} = \mu (B_{t-1},H_{t-1})a^{b},
\end{equation}
\begin{equation}\label{C-eq}
	G_{t} = i^*_{t}\frac{w_{t}}{p_{t}}, 
\end{equation}
\begin{equation}\label{D-eq}
	B_{t} = (1 - i^*_{t})w_{t},
\end{equation}
\begin{equation}\label{p-eq}
	p_{t} = \frac{a^{b}}{a^{g}}.
\end{equation}
\begin{equation}\label{C/D-eq}
	\frac{G_{t}}{B_{t}} 
	= \frac{i^*_{t}}{1 - i^*_{t}}\frac{a^{g}}{a^{b}}.
\end{equation}

As for $i^*_{t}$ itself, it is determined as follows. If $\lambda_{t}\gamma(0) \leq \frac{a^{b}}{(1+\tau_{t})a^{g}}$, then $i_{t}^{*} = 0$; if $\lambda_{t}\gamma(1) < \frac{a^{b}}{(1+\tau_{t})a^{g}} < \lambda_{t}\gamma(0)$, then $i_{t}^{*}$ is the solution to the following equation in $i$:
\begin{equation}\label{equ-j}
	\lambda_{t}\gamma(i) = \frac{a^{b}}{(1+\tau_{t})a^{g}};
\end{equation} 
if $\lambda_{t}\gamma(1) \geq \frac{a^{b}}{(1+\tau_{t})a^{g}}$, then $i_{t}^{*}=1$.
In other terms,
\begin{equation}\label{jt-eq}
	i^*_{t} = \psi(i^*_{t-1};\tau_{t-1}),
\end{equation}
where 
\begin{equation}\label{jt-eq-1}
		\psi(j;\tau) = 
		\begin{cases}
		0,   &  \text{if} \ \lambda \left(\frac{j}{1 - j}\frac{a^{g}}{a^{b}}\right)\gamma(0) \leq \frac{a^{b}}{(1+\tau)a^{g}}; \\
		\gamma^{-1}(\frac{a^{b}}{(1+\tau)a^{g}\lambda \left(\frac{j}{1 - j}\frac{a^{g}}{a^{b}}\right)}), & \text{if} \ 
		\lambda \left(\frac{j}{1 - j}\frac{a^{g}}{a^{b}}\right)\gamma(1) < \frac{a^{b}}{(1+\tau)a^{g}} < \lambda \left(\frac{j}{1 - j}\frac{a^{g}}{a^{b}}\right)\gamma(0); \\
		1,   &  \text{if} \ \lambda \left(\frac{j}{1 - j}\frac{a^{g}}{a^{b}}\right)\gamma(1) \geq \frac{a^{b}}{(1+\tau)a^{g}}. 
	\end{cases}
\end{equation}
Note that $\psi(j;\tau)$ is continuous and non-decreasing in $j$.

We now assume for simplicity that the government can commit to a constant tax rate, 
\[
\tau_{t} = \tau, \ t=0,1,... .
\]
and consider the discrete dynamic system given by
\begin{equation}\label{j-eq-dyn}
	i^*_{t+1} = \psi(i^*_{t},\tau), \ t = 0,1,\ldots
\end{equation}
and equation (\ref{jt-eq-1}). This system fully describes the equilibrium dynamics in our model for a given tax rate $\tau$.

Let us characterise its main properties. We begin with the characterisation of the fixed point, \[j = \psi(j,\tau), \ t = 0,1,\ldots.\]

\begin{claim}
	If $\lambda (0)\gamma(0) < \frac{a^{b}}{a^{g}}$, then for sufficiently small $\tau$, $j=0$ is a locally asymptotically stable fixed point.
\end{claim}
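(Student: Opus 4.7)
The plan is to exploit the piecewise structure of $\psi(\,\cdot\,;\tau)$ rather than attempting to linearise it. Observe that the three branches in \eqref{jt-eq-1} are defined by comparisons between $\lambda\!\left(\frac{j}{1-j}\frac{a^{g}}{a^{b}}\right)\gamma(0)$ (and $\gamma(1)$) and the threshold $\frac{a^{b}}{(1+\tau)a^{g}}$. The first branch returns the constant value $0$; hence the claim will follow if we can show that, for small $\tau$, the whole region $\{\lambda(0)\gamma(0) < \frac{a^{b}}{(1+\tau)a^{g}}\}$ persists on an entire right-neighbourhood of $j=0$, which would make convergence not merely asymptotic but finite-time (one step).

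First I would verify that $j=0$ is a fixed point for small $\tau$. At $j=0$ the argument of $\lambda$ equals $0$, so the defining inequality of the first branch becomes $\lambda(0)\gamma(0)\leq \frac{a^{b}}{(1+\tau)a^{g}}$. Since the hypothesis gives $\lambda(0)\gamma(0)< \frac{a^{b}}{a^{g}}$ strictly, I can pick
\[
\bar{\tau} \;=\; \tfrac{1}{2}\!\left(\frac{a^{b}}{a^{g}\lambda(0)\gamma(0)}-1\right)\;>\;0,
\]
and obtain $\lambda(0)\gamma(0) < \frac{a^{b}}{(1+\tau)a^{g}}$ for every $\tau\in[0,\bar{\tau}]$, so $\psi(0;\tau)=0$.

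Next I would extend this strict inequality to a right-neighbourhood of $j=0$ using continuity. The map $j\mapsto \lambda\!\left(\frac{j}{1-j}\frac{a^{g}}{a^{b}}\right)\gamma(0)$ is continuous at $j=0$ with value $\lambda(0)\gamma(0)$, which lies strictly below $\frac{a^{b}}{(1+\tau)a^{g}}$ for every $\tau\in[0,\bar{\tau}]$ by Step 1. Hence, for each such $\tau$, there exists $\delta=\delta(\tau)>0$ with
\[
\lambda\!\left(\tfrac{j}{1-j}\tfrac{a^{g}}{a^{b}}\right)\gamma(0) \;<\; \tfrac{a^{b}}{(1+\tau)a^{g}} \qquad \text{for all } j\in[0,\delta).
\]
By \eqref{jt-eq-1}, this places every such $j$ in the first branch, so $\psi(j;\tau)=0$. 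Consequently, any trajectory of \eqref{j-eq-dyn} starting from $i^{*}_{0}\in[0,\delta)$ satisfies $i^{*}_{1}=0$ and remains there, establishing local asymptotic stability of the fixed point $j=0$.

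I do not anticipate a genuinely hard step here: the result is driven by the piecewise-constant first branch of $\psi$ and the strictness of the hypothesis. The one point requiring a little care is that the first branch in \eqref{jt-eq-1} is defined with a non-strict inequality ($\leq$), so I must keep the strict inequality in hand to obtain an open neighbourhood $[0,\delta)$ on which the branch is certainly active; this is why I chose $\bar{\tau}$ as half of the admissible slack rather than at the boundary.
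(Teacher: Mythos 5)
Your proposal is correct and follows essentially the same route as the paper: choose $\tau$ small enough that $\lambda(0)\gamma(0)<\frac{a^{b}}{(1+\tau)a^{g}}$ holds strictly, then use continuity of $j\mapsto\lambda\bigl(\frac{j}{1-j}\frac{a^{g}}{a^{b}}\bigr)\gamma(0)$ at $j=0$ to extend the inequality to a right-neighbourhood, on which $\psi(\cdot;\tau)\equiv 0$, giving (one-step) local asymptotic stability. Your version merely makes the quantifiers and the choice of $\bar{\tau}$ explicit.
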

\begin{proof}
	If $\tau$ is sufficiently small, then $\lambda (0)\gamma(0) < \frac{a^{b}}{(1+\tau)a^{g}}$. Therefore, if $j$ is also sufficiently small, then $\lambda \left(\frac{j}{1 - j}\frac{a^{g}}{a^{b}}\right)\gamma(0) < \frac{a^{b}}{(1+\tau)a^{g}}$ and hence $\psi(j,\tau)=0$.
\end{proof}

The total quantity of the green good in the SSE associated with the fixed point $i^{*}=0$ is zero. The total quantity of the brown good, $B^{*}$, is a solution to the equation 
\begin{equation*}
	B = \mu (B, \tau B) a^{b}l^d,
\end{equation*}
which, using \[
l^d = 1 - l^h = 1 - \frac{\tau B}{\mu (B, \tau B)a^{b}},
\]becomes 

\begin{equation} \label{eqD}
B = \frac{\mu (B, \tau B) a^{b}}{1+\tau}.
\end{equation}

Recall that $\mu(B,H)$ is decreasing in $B$ and increasing in $H$: health status worsens with more of brown good produced and consumed, and it improves with more publicly funded healthcare services provided. Since $H=\tau B$, the sign of the relationship between $\mu(B,\tau B)$ and $B$ is ambiguous: 
\[
\frac{d\mu}{dB} = \frac{\partial{\mu}}{\partial{B}} + \tau \frac{\partial {\mu}}{\partial{H}} =  \frac{\partial{\mu}}{\partial{B}} \left( 1 + \frac{\varepsilon_H}{\varepsilon_B}\right),
\]
where $\varepsilon_B \equiv \frac{B}{\mu}\frac{\partial{\mu}}{\partial{B}}<0$ and $\varepsilon_H \equiv \frac{H}{\mu}\frac{\partial{\mu}}{\partial{H}}>0$ are partial elasticities. In what follows we will assume that 
\begin{equation} \label{elast}
    \varepsilon_H < \lvert \varepsilon_B \rvert ,
\end{equation}
that is, the health status is relatively more sensitive to brown good production and consumption than to the healthcare provision. Assumption (\ref{elast}) is reasonable in the context of our model; otherwise, producing and consuming more of the brown good would be good for health because of the higher spending on healthcare services funded by taxing that good. Also note that under this assumption normalisation $\mu (0, 0) = 1$ implies $\mu (B, \tau B) < 1$ for $B>0$. In this sense, $\mu (B, \tau B)$ is the damage function.

Under assumption (\ref{elast}), $\mu (B, \tau B)$ is decreasing in $B$, and, therefore, equation (\ref{eqD}) has a unique solution, $B^*>0$.

Intuitively, the economy will be trapped in the brown SSE if the preference for green good is weak even for the `greenest' consumer or the technology of green good production is insufficiently developed in comparison to that of the brown good production. Increasing the tax on brown good sales can break this condition. 

\begin{claim}
	For sufficiently large $\tau$, $j=1$ is globally asymptotically stable fixed point.
\end{claim}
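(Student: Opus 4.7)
The plan is to show that for $\tau$ sufficiently large, $\psi(\,\cdot\,,\tau)\equiv 1$ on all of $[0,1]$, from which global asymptotic stability of $j=1$ follows immediately.

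According to the third branch of \eqref{jt-eq-1}, $\psi(j,\tau)=1$ whenever
\[
\lambda\!\left(\frac{j}{1-j}\frac{a^{g}}{a^{b}}\right)\gamma(1) \;\geq\; \frac{a^{b}}{(1+\tau)a^{g}}.
\]
Since $\lambda$ is increasing and $j\mapsto j/(1-j)$ is non-decreasing on $[0,1]$, the left-hand side attains its minimum over $[0,1]$ at $j=0$, where it equals $\lambda(0)\gamma(1)$. Under the natural hypothesis $\lambda(0)\gamma(1)>0$---implicit in the statement, since otherwise the least green consumer $i=1$ never strictly prefers the green good and $j=1$ is unreachable---it therefore suffices to take
\[
\tau \;\geq\; \frac{a^{b}}{a^{g}\lambda(0)\gamma(1)}-1,
\]
so that the displayed inequality holds uniformly in $j\in[0,1]$, and hence $\psi(j,\tau)=1$ for every $j\in[0,1]$.

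Global asymptotic stability is then automatic: $j=1$ is manifestly a fixed point, and any trajectory of \eqref{j-eq-dyn} starting from an arbitrary $i^{*}_{0}\in[0,1]$ satisfies $i^{*}_{1}=\psi(i^{*}_{0},\tau)=1$, after which $i^{*}_{t}=1$ for all $t\geq 1$. The main conceptual step is identifying the binding case as the pair combining the weakest social norm ($j=0$) with the least green consumer ($i=1$); once that worst case is pinned down, the rest reduces to a single comparison between $\lambda(0)\gamma(1)$ and the tax-adjusted relative productivity $a^{b}/((1+\tau)a^{g})$, so no delicate stability argument (linearisation or Lyapunov function) is required—convergence is achieved in a single step.
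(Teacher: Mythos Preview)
Your proof is correct but proceeds along a different line from the paper's. The paper chooses $\tau$ large enough that $\lambda(\infty)\gamma(1) > \frac{a^{b}}{(1+\tau)a^{g}}$ and then asserts that for such $\tau$ one has $\psi(j;\tau) > j$ for all $j\in[0,1]$, so the trajectory is monotone increasing and converges to $1$. You instead push $\tau$ to a (generically) higher threshold, governed by $\lambda(0)\gamma(1)$ rather than $\lambda(\infty)\gamma(1)$, which forces $\psi(\,\cdot\,;\tau)\equiv 1$ outright and delivers convergence in a single step. Your route is more elementary---no monotone-sequence argument is needed, and the worst case $(j,i)=(0,1)$ is identified cleanly---whereas the paper's route yields a sharper sufficient tax level. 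Since the claim only asks for \emph{some} sufficiently large $\tau$, either threshold suffices; your explicit acknowledgment that $\lambda(0)\gamma(1)>0$ is needed is appropriate, as without it $j=0$ remains a fixed point and global stability of $j=1$ fails regardless of $\tau$.
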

\begin{proof}
	It is sufficient to observe that $j=1$ is globally asymptotically stable for any $\tau$ such that $\lambda (\infty)\gamma(1) > \frac{a^{b}}{(1+\tau)a^{g}}$, because for such $\tau$, $\psi(j;\tau) > j \ \forall  j\in[0,1]$.
\end{proof}

The total quantity of the green good in the SSE associated with the fixed point $i^{*}=1$ is equal to $G^{*} = a^{g}$. No brown good is produced in this equilibrium.

Clearly, this is more likely to hold when the preference for green good is strong even for the `least green' consumer or the technology of green good production is sufficiently developed. In addition, this equilibrium is more likely to emerge, the higher is the rate of tax on the brown good.

\begin{claim}
	For a given $\tau \geq 0$ , if $\lambda (0)\gamma(0) < \frac{a^{b}}{(1+\tau)a^{g}}$ and $\lambda (\infty)\gamma(1) > \frac{a^{b}}{(1+\tau)a^{g}}$, then both $j=0$ and $j=1$ are locally asymptotically stable fixed points. 
\end{claim}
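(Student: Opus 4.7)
The plan is to mirror the proofs of Claims 1 and 2, treating the two fixed points separately and exploiting the piecewise definition of $\psi$ in (\ref{jt-eq-1}). The hypotheses provide strict inequalities at the endpoints $0$ and $\infty$ of the domain of $\lambda$, and continuity (resp.\ monotonicity) should propagate these strict inequalities to full one-sided neighborhoods of $j=0$ and $j=1$, placing the dynamics into the ``flat'' outer branches of $\psi$. Once that is done, each neighborhood maps to the respective fixed point in a single iteration, which is a stronger property than local asymptotic stability.

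For $j=0$, I would invoke the hypothesis $\lambda(0)\gamma(0) < \frac{a^{b}}{(1+\tau)a^{g}}$. Since $\lambda$ is continuous and $\frac{j}{1-j}\frac{a^{g}}{a^{b}} \to 0$ as $j \to 0^{+}$, there exists $\delta_{0}>0$ such that $\lambda\!\left(\frac{j}{1-j}\frac{a^{g}}{a^{b}}\right)\gamma(0) < \frac{a^{b}}{(1+\tau)a^{g}}$ for every $j \in [0,\delta_{0})$. By the first branch of (\ref{jt-eq-1}), $\psi(j;\tau)=0$ identically on this interval, so trajectories starting in $[0,\delta_{0})$ reach $j=0$ in one step and stay there.

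For $j=1$, I would use $\lambda(\infty)\gamma(1) > \frac{a^{b}}{(1+\tau)a^{g}}$, interpreting $\lambda(\infty)=\lim_{x\to\infty}\lambda(x)\in(0,+\infty]$. Since $\frac{j}{1-j}\frac{a^{g}}{a^{b}}\to\infty$ as $j\to 1^{-}$ and $\lambda$ is monotone, there exists $\delta_{1}>0$ such that $\lambda\!\left(\frac{j}{1-j}\frac{a^{g}}{a^{b}}\right)\gamma(1) > \frac{a^{b}}{(1+\tau)a^{g}}$ for every $j\in(1-\delta_{1},1]$. The third branch of (\ref{jt-eq-1}) then gives $\psi(j;\tau)=1$ throughout this neighborhood, again yielding one-step convergence and hence local asymptotic stability.

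I do not anticipate a substantive obstacle: the argument is essentially a re-packaging of the ideas in Claims 1 and 2 applied simultaneously, the only subtlety being a careful interpretation of $\lambda(\infty)$ (which is automatic if $\lambda$ is bounded and trivial if $\lambda(\infty)=+\infty$). It is worth remarking, though not needed for the statement, that under the two hypotheses at least one additional fixed point must lie in the interior middle branch of $\psi$ by continuity; such an interior fixed point is typically unstable and separates the basins of attraction of $j=0$ and $j=1$, but establishing this is unnecessary for the local claim at hand.
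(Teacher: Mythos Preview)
Your proposal is correct and is exactly the argument the paper has in mind: the paper does not spell out a separate proof for this claim, evidently because it follows by combining the proofs of Claims~1 and~2 at the fixed $\tau$, which is precisely what you do. Your remark about the interior unstable fixed point also matches the paper's subsequent commentary.
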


In this case, generically, there exists at least one unstable fixed point at some $j \in (0,1)$, and a suitable choice of $\tau$ can put the dynamic system on the path converging to the green equilibrium, associated with the fixed point $j=1$. The following claim says that when both $j=0$ and $j=1$ are locally asymptotically stable fixed points, then under certain conditions the green SSE (associated with $i^{*}=1$) is characterised by a higher consumption level than the `brown' SSE (associated with $i^{*}=0$).

As above, let $B^*$ and $G^*$ denote the total quantities of the brown good and the green good in the SSE associated with fixed points $i^*=0$ and $i^*=1$, respectively. 

\begin{claim}
	Let $\lambda (0)\gamma(0) < \frac{a^{b}}{(1+\tau)a^{g}}$ and $\lambda (\infty)\gamma(1) > \frac{a^{b}}{(1+\tau)a^{g}}$, for a given $\tau\geq 0$. Then $G^*>B^*$ if $a^b \leq a^g$.
\end{claim}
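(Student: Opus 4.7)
The plan is to compute $G^{*}$ and $B^{*}$ explicitly (up to an inequality for $B^{*}$) and then chain the resulting bounds using the hypothesis $a^{b}\le a^{g}$.

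First, I would pin down $G^{*}$. At the fixed point $i^{*}=1$ no brown good is produced, so $B=0$ and consequently $H=\tau B=0$. By the normalisation $\mu(0,0)=1$, the entire unit of labour is allocated to green production (no labour is needed in healthcare), giving $G^{*}=a^{g}\cdot 1\cdot 1=a^{g}$. This step is essentially a bookkeeping exercise and uses only the definitions.

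Next I would bound $B^{*}$ from above. The brown SSE satisfies equation (\ref{eqD}),
\[
B^{*}=\frac{\mu(B^{*},\tau B^{*})\,a^{b}}{1+\tau}.
\]
Here I invoke the earlier observation (justified by assumption (\ref{elast}) together with $\mu(0,0)=1$) that $\mu(B,\tau B)<1$ for every $B>0$; since $B^{*}>0$ and $\tau\ge 0$, this immediately gives $B^{*}<a^{b}$.

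Finally, combining the two results with the hypothesis $a^{b}\le a^{g}$ yields $B^{*}<a^{b}\le a^{g}=G^{*}$, which is the desired strict inequality. The main (very mild) obstacle is simply verifying the strict inequality $\mu(B^{*},\tau B^{*})<1$, which hinges on assumption (\ref{elast}) ensuring that $\mu(B,\tau B)$ is strictly decreasing in $B$ starting from $\mu(0,0)=1$; everything else is a direct substitution into the fixed-point equations.
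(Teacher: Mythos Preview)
Your proposal is correct and follows essentially the same route as the paper: both compute $G^{*}=a^{g}$, plug $B^{*}=\dfrac{\mu(B^{*},\tau B^{*})\,a^{b}}{1+\tau}$, invoke $\mu(B^{*},\tau B^{*})<1$, and then use $a^{b}\le a^{g}$ and $\tau\ge 0$ to conclude. The only cosmetic difference is that the paper forms the single ratio $\dfrac{B^{*}}{G^{*}}=\dfrac{\mu(B^{*},\tau B^{*})\,a^{b}}{(1+\tau)a^{g}}<1$, whereas you chain the inequalities $B^{*}<a^{b}\le a^{g}=G^{*}$; the underlying argument is identical.
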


\begin{proof}
    Using the last inequality condition above,
    \[
    \frac{B^*}{G^*} 
    = \frac{\mu(B^*, \tau B^*)a^b}{(1+\tau )a^g} < 1
    \] 
\end{proof}

Because the green and brown goods are perfect substitutes, higher aggregate consumption level in the green SSE suggests that it is socially preferred to the brown SSE.\footnote{Alternatively, one can define the social welfare as the aggregate utility of the households; the welfare comparison gives the same result (see Appendix A for the proof). We prefer the comparison of the consumption levels because in the aggregate utility consumption of households with different preferences for the green good effectively is weighted differently.} Condition $a^b \leq a^g$, which says that the green firm is at least as productive as the brown firm, is, of course, sufficient but not necessary. The aggregate consumption in the green equilibrium is still higher than in the brown equilibrium even if the green sector is less well developed than the brown sector, $a^g < a^b$, but the gap is not too wide.

\section{Green transition policy}
The results presented in the previous section suggest that for any given distribution of individual preferences such that the initial consumption of the brown good is positive the government can move the economy to the green SSE, with zero brown good consumption, instantaneously, by taxing it at a sufficiently high rate. That is, the tax rate must satisfy $\tau \geq  \hat{\tau}_t (1) \equiv \frac{a^b}{a^g \lambda_t \gamma (1)}-1$. This policy, however, ignores the social norm externality in the green good consumption reflected in the social component of household preferences. Since sales tax entails a  negative shock to the households' consumption and welfare levels, a lower tax achieving the same goal would be a better policy. 

Indeed, the government can exploit the social norm externality and impose sales tax at the rate just enough to move the marginal consumer over the threshold, $\tau \geq  \hat{\tau}_t(\hat{j}) \equiv \frac{a^b}{a^g \lambda_t \gamma (\hat{j})}-1$. Here, $\hat{j} = \psi \left( \hat{j};\tau \right)$. The latter condition describes the locally asymptotically unstable SSE or, if there are more than one, then the one associated with the highest $j$. The position of the marginal consumer depends on the value of this $j$ and on the distribution of of $\gamma (i)$. Since, generically, $\hat{j}<1$, we have $\gamma (\hat{j}) > \gamma (1)$ and, therefore, $\hat{\tau}_t(\hat{j}) < \hat{\tau}_t (1)$.

Under this policy, as the green good consumption increases in the initial period, more households will gradually switch to the green good in the subsequent periods, and the economy will converge to the green SSE. Note that in both cases, with the high tax rate $\hat{\tau}_t (1)$ and the moderate tax rate $\hat{\tau}_t(\hat{j})$, the tax can be eventually removed if $\lambda (\infty)\gamma(1) > \frac{a^{b}}{a^{g}}$. Furthermore, the local stability property of the green SSE allows reducing the tax rate along the transition path and still maintaining convergence.\footnote{This result is similar to the one obtained by \citet{borissov2019carbon} in a setting where the existence of human capital externalities enables transition to the green technology by the means of temporary moderate taxation of the brown good.} This is more likely to be the case if the social pressure to buy the green good is relatively high, or the initial preference for green good of the least green consumer is relatively low, or the production technology in the green-good sector is relatively well developed compared to that of the brown-good sector. 

\section{Concluding remarks}
We analysed a green transition policy in the form of a tax on brown good consumption in an economy where individual preferences for green consumption combine a constant individual-specific intrinsic component and an evolving common social component. We characterise the dynamic properties of the equilibria in this economy when the social preference exerts a positive externality in green consumption. This creates complementarity between preferences and policy that discourages brown good consumption and allows exploiting the social norm effect in policy design. We derive the conditions for local stability of the green steady-state equilibrium and show that to achieve green transition the constant tax rate that takes the evolution of social norm into account can be lower than the constant tax rate ignoring the social effect. Moreover, stability ensures that the tax rate can be reduced along the transition path and eventually even be removed, resulting in smaller welfare loss in transition. Thus, a better design of the green transition policy can be achieved when the interaction between policy and preferences is taken into account.

In this paper we made simplifying assumptions about the linearity of production technology and the consumer preferences, to make the dynamic properties of the equilibria tractable. We conjecture that qualitatively similar results will hold for more general, `standard' assumptions on production function and utility function. Rather than calculating the optimal tax policy, we focussed on the conditions under which the tax can facilitate green transition. To construct the optimal path for the tax rate (say, maximising consumption along the transition path) one needs to make more specific assumptions on the distribution of intrinsic green preferences, the shape of the social preferences, and the damage function. Such an exercise seems of a minor value, given the highly stylised nature of the model, adopted to investigate general patterns, but might be of interest, for example, for the purpose of comparison of policy implications for different distributions of preferences.

\bibliographystyle{elsarticle-harv}
\bibliography{references}

\appendix
\section{Social welfare comparison of green and brown steady state equilibria}
 Define the social welfare as the aggregate utility of the households,
\[
SW_t = \int_0^1u^id\Gamma(i),
\] where $\Gamma(i)$ describes the distribution of individual-specific preference parameter $\gamma(i)$ across consumers.

The following claim says that when both $j=0$ and $j=1$ are locally asymptotically stable fixed points, then the green steady-state equilibrium (associated with $i^{*}=1$) is socially preferable to the `brown' steady-state equilibrium (associated with $i^{*}=0$).

As above, let $B^*$ and $G^*$ denote the total quantities of the brown good and the green good in the steady-state equilibria associated with fixed points $i^*=0$ and $i^*=1$, respectively, and let $SW_B^*$ and $SW_G^*$ be the corresponding social welfare levels. 

\begin{claim}
	Let $\lambda (0)\gamma(0) < \frac{a^{b}}{(1+\tau)a^{g}}$ and $\lambda (\infty)\gamma(1) > \frac{a^{b}}{(1+\tau)a^{g}}$, for a given $\tau\geq 0$. Then $SW_G^*>SW_B^*$.
\end{claim}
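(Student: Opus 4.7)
The plan is to compute $SW_B^*$ and $SW_G^*$ directly from the equilibrium characterisation and then use the hypothesis $\lambda(\infty)\gamma(1) > \frac{a^b}{(1+\tau)a^g}$ to rank them.

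First, in the brown SSE ($i^* = 0$) every household consumes $b(i) = w_B^*/(1+\tau)$ and no green good, so individual utility equals $w_B^*/(1+\tau)$, independent of $i$. Using $w_B^* = \mu(B^*, \tau B^*) a^b$ together with the steady-state equation (\ref{eqD}), this gives $SW_B^* = B^*$. In the green SSE ($i^* = 1$) no brown good is produced, so $H=0$, $\mu(0,0)=1$, $w_G^* = a^b$, $p^* = a^b/a^g$, and each household consumes $g(i) = a^g$. The social multiplier is evaluated at $G/B = \infty$, so $u^i = \lambda(\infty)\gamma(i)\,a^g$, and therefore
\[
SW_G^* = \lambda(\infty)\, a^g \,\bar{\gamma}, \qquad \bar{\gamma} \equiv \int_0^1 \gamma(i)\, d\Gamma(i),
\]
with $\Gamma$ taken as the uniform labelling of households on $[0,1]$ (under which the mean $\bar\gamma$ strictly exceeds $\gamma(1)$ because $\gamma$ is strictly decreasing).

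For the comparison, the hypothesis yields $\lambda(\infty)\, a^g\, \gamma(1) > \frac{a^b}{1+\tau}$. The damage-function property $\mu(B^*, \tau B^*) < 1$, which follows from assumption (\ref{elast}) and the normalisation $\mu(0,0)=1$, combined with (\ref{eqD}) gives $B^* < \frac{a^b}{1+\tau}$. Chaining these, $\lambda(\infty)\, a^g\, \gamma(1) > B^*$, and since $\bar\gamma > \gamma(1)$,
\[
SW_G^* = \lambda(\infty)\, a^g\, \bar{\gamma} \;>\; \lambda(\infty)\, a^g\, \gamma(1) \;>\; B^* = SW_B^*.
\]

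The main obstacle is really just bookkeeping: one has to correctly identify the value of the social preference multiplier in each SSE (in particular the limit $\lambda(\infty)$ on the green side, where $B=0$ formally puts $G/B$ at infinity) and be consistent about the aggregation measure $d\Gamma$. Once the two welfare levels are written out, the inequality chain is almost immediate from the hypothesis and from the fact that $\mu$ acts as a damage function, so no additional structural argument is needed beyond what has already been established in the equilibrium analysis.
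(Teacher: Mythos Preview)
Your argument is correct and matches the paper's own proof in substance: both compute $SW_B^*=B^*$ and $SW_G^*=\lambda(\infty)\bar\gamma\,a^g$, then combine the hypothesis $\lambda(\infty)\gamma(1)>\tfrac{a^b}{(1+\tau)a^g}$ with $\mu(B^*,\tau B^*)<1$ and $\bar\gamma>\gamma(1)$ to conclude. The only cosmetic difference is that the paper packages the comparison as a ratio $SW_B^*/SW_G^*<1$, whereas you write it as a direct chain of inequalities.
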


\begin{proof}
    The social welfare levels in the brown and green steady-state equilibria are given by  $SW_B^* = B^*$ and $SW_G^* = \lambda (\infty)\int_0^1\gamma(i)d\Gamma(i)G^* = \lambda (\infty)\hat{\gamma}C*$, where $\hat{\gamma} \equiv \int_0^1\gamma(i)d\Gamma(i)$ is the preference parameter of the average household. Hence, using the second inequality condition above,
    \[
    \frac{SW_B^*}{SW_B^*} = \frac{B^*}{\lambda (\infty)\hat{\gamma}G^*} 
    = \frac{\mu(B^*, \tau B^*)}{(1+\tau )\lambda (\infty)\hat{\gamma}G^*}  
    < \mu(B^*, \tau B^*) \frac{\gamma(1)}{\hat{\gamma}}<1
    \] 
\end{proof}
\end{document}